\newtheorem{theorem}{Theorem}[section]
\newtheorem{proposition}[theorem]{Proposition}
\newtheorem{conjecture}[theorem]{Conjecture}
\newcommand{\myemph}{\textbf}
\newcommand{\Cech}{\v{C}ech~}
\newcommand{\ZZ}{\mathbb{Z}}
\newcommand{\FF}{\mathcal{F}}
\newcommand{\UU}{\EuScript{U}}
\newcommand{\rmap}[2]{\rho^{#1}_{#2}}
\newcommand{\lrarr}{\longrightarrow}
\newcommand{\rarr}{\rightarrow}
\newcommand{\id}{\mathsf{id}}
\newcommand{\NN}{\mathsf{N}}
\newcommand{\vn}{\varnothing}
\newcommand{\bdmap}{\partial}
\newcommand{\cobd}{\delta}
\newcommand{\Ccoch}[1]{C^{#1}(\UU, \FF)}
\newcommand{\Cocyc}[1]{Z^{#1}(\UU, \FF)}
\newcommand{\Cobound}[1]{B^{#1}(\UU, \FF)}
\newcommand{\Cohom}[1]{\check{H}^{#1}(\UU, \FF)}
\newcommand{\EE}{\mathcal{E}}
\newcommand{\FR}{F_{R}}
\newcommand{\FZ}{F_{\ZZ}}
\newcommand{\FC}{\FF_{\bar{C}_{1}}}
\newcommand{\FU}{\FF_{\bar{U}}}
\newcommand{\FUi}{\FF_{\bar{U}_{i}}}
\newcommand{\CoFC}[1]{\check{H}^{#1}(\UU, \FC)}
\newcommand{\CoFUi}[1]{\check{H}^{#1}(\UU, \FUi)}
\newcommand{\Set}{\mathbf{Set}}
\newcommand{\ie}{\textit{i.e.}~}
\newcommand{\supp}{\mathsf{supp}}
\newcommand{\Real}{\mathbb{R}}
\newcommand{\obst}{\gamma}
\newcommand{\Cij}{C_{i,j}}
\newcommand{\GCD}{\textsf{GCD}~}
\title{The Cohomology of Non-Locality and Contextuality}
\author{Samson Abramsky
\qquad
Shane Mansfield
\qquad Rui Soares Barbosa
\institute{Department of Computer Science\\
University of Oxford
}
\email{\{samson.abramsky,shane.mansfield,rui.soaresbarbosa\}@cs.ox.ac.uk}
}
\begin{document}
\maketitle

\begin{abstract}
In a previous paper with Adam Brandenburger, we used sheaf theory to analyze the structure of non-locality and contextuality. 
Moreover, on the basis of this formulation, we showed that the phenomena of non-locality and contextuality can be characterized precisely in terms of obstructions to the existence of global sections. 

Our aim in the present work is to build on these results, and to use the powerful tools of sheaf cohomology to study the structure of non-locality and contextuality. 
We use the Cech cohomology on an abelian presheaf derived from the support of a probabilistic model, viewed as a compatible family of distributions, in order to define a cohomological obstruction for the family as a certain cohomology class. 
This class vanishes if the family has a global section. 
Thus the non-vanishing of the obstruction provides a sufficient (but not necessary) condition for the model to be contextual. 

We show that for a number of salient examples, including PR boxes, GHZ states, the Peres-Mermin magic square, and the 18-vector configuration due to Cabello et al.~giving a proof of the Kochen-Specker theorem in four dimensions, the obstruction does not vanish, thus yielding cohomological witnesses for contextuality.
\end{abstract}

\section{Introduction}

Non-locality and contextuality are fundamental features of physical theories, which contradict the intuitions underlying classical physics. They are, in particular, prominent features of quantum mechanics, and the goal of the  classic no-go theorems by Bell \cite{bell1964einstein}, Kochen-Specker \cite{kochen1975problem}, et al.~is to show that they are \textit{necessary features} of any theory whose experimental predictions agree with those of quantum mechanics.

Bell's insights into non-locality have been seminal to the current developments in quantum information, where entanglement is viewed as a key informatic resource; and there has also been considerable recent work on experimental tests for contextuality \cite{bartosik2009experimental,kirchmair2009state}.

In a previous paper with Adam Brandenburger \cite{abramsky2011unified}, we used the mathematics of \myemph{sheaf theory} to analyze the structure of non-locality and contextuality. Sheaf theory is pervasive in modern mathematics, allowing the passage from local to global \cite{mac1992sheaves}. Starting from a simple experimental scenario, and the kind of probabilistic models familiar from discussions of Bell's theorem, Popescu-Rohrlich boxes \cite{popescu1994quantum}, etc., we gave a very direct, compelling formalization of these notions in sheaf-theoretic terms.
Moreover, on the basis of this formulation, we showed that the phenomena of non-locality and contextuality can be characterized precisely  in terms of \myemph{obstructions to the existence of global sections}.

Our aim in the present work is to build on these results, and to use the powerful tools of \myemph{sheaf cohomology} to study the structure of non-locality and contextuality. The present paper describes work in progress, and only represents an initial step in this direction. Nevertheless, enough has been achieved to indicate that this approach has some promise, and merits further investigation.

We briefly summarize our results:
\begin{itemize}
\item We use the formalization of no-signalling probabilistic models as compatible families of sections on a presheaf of distributions developed in  \cite{abramsky2011unified}; compatibility corresponds precisely to the no-signalling condition. The family is defined on a cover corresponding to the sets of compatible measurements.
\item The locality/non-contextuality  of the model corresponds to the existence of a global section for this family, as shown in \cite{abramsky2011unified}.
\item We use the \Cech cohomology on an abelian presheaf derived from the support of the model in order to define a \myemph{cohomological obstruction} for the family as a certain cohomology class.
This class vanishes if the family has a global section.
Thus the non-vanishing of the obstruction provides a sufficient (but not necessary) condition for the model to be contextual.
\item We show that for a number of salient examples, including PR boxes  \cite{popescu1994quantum}, GHZ states \cite{greenberger1989going,greenberger1990bell},  the Peres-Mermin magic square \cite{peres1990incompatible,mermin1990simple}, and the 18-vector configuration giving a proof of the Kochen-Specker theorem in four dimensions from \cite{cabello1996bell}, the obstruction does not vanish, thus yielding cohomological witnesses for contextuality.
\end{itemize}

The further contents of the paper are as follows. We review the sheaf formulation from  \cite{abramsky2011unified} in Section~2, and \Cech cohomology in Section~3. We define the cohomological obstruction in Section~4, and consider various examples in Sections~5 and ~6. Finally, limitations of the current results and further directions are discussed in Section~7.

\section{Sheaf Formulation of Measurement Scenarios}

We recall the basic ideas of \cite{abramsky2011unified}.

We work over a finite discrete space $X$, which we think of as a set of \myemph{measurement labels}.
We fix a  cover $\UU = \{ C_1 , \ldots , C_n \}$, with $\bigcup \UU = X$,  which represents the set of \myemph{compatible families of measurements}, \ie those which can be made jointly. Fixing a finite set $O$ of \myemph{outcomes}, we have the presheaf of sets $\EE$ on $X$, where
$\EE(U) := O^U$, and restriction is simply function restriction:
given $U \subseteq U'$, 
\[ \rmap{U'}{U} : \EE(U') \rarr \EE(U) :: s \mapsto s | U . \]
Since $X$ is discrete, $\EE$ is (trivially) a sheaf. We think of it as the sheaf of \myemph{events}.

An empirical model $e$ in the sense of  \cite{abramsky2011unified} is a compatible family $\{ e_C \}_{C \in \UU}$, where $e_C$ is a probability distribution on $\EE(C)$. Here compatibility uses the definition of restriction on distributions, which we omit since we shall not need it. The support of $e$ determines a sub-presheaf $S_e$ of $\EE$: 
\[ S_e(U) := \{ s \in \EE(U) \mid s \in \supp(e_U) \} . \]
Here $e_U = e_C | U$ for any $C \in \UU$ such that $U \subseteq C$. The compatibility of the family $\{ e_C \}$ ensures that this is independent of the choice of $C$.

We have the following notions from \cite{abramsky2011unified}.
\begin{itemize}
\item The model $e$ is \myemph{possibilistically extendable} iff for every $s \in S_e(C)$, $s$ is a member of a compatible family $\{ s_i \in S_e(C_i) \}_{i = 1}^{n}$.
It is \myemph{possibilistically non-extendable} if for some $s$, there is no such family.
\item The model $e$ is \myemph{strongly contextual} if for every $s$ there is no such family.
\end{itemize}

The results from \cite{abramsky2011unified} show that if a model is local or non-contextual in the usual sense, then it is possibilistically extendable. Thus possibilistic non-extendability is a sufficient condition for \myemph{non-locality} or \myemph{contextuality}. Strong contextuality is a much stronger condition. Thus these properties witness strong forms of the non-classical behaviour exhibited by quantum mechanics.

\section{\Cech Cohomology of a Presheaf}

We are given the following:
\begin{itemize}
\item A topological space $X$.
\item An open cover $\UU$ of $X$.
\item A presheaf $\FF$ of abelian groups on $X$.
For each open set $U$ of $X$, $\FF(U)$ is an abelian group, and when $U \subseteq V$, there is a group homomorphism $\rmap{V}{U} : \FF(V) \rarr \FF(U)$. These assignments are functorial:
$\rmap{U}{U} = \id_{U}$, and if $U \subseteq U' \subseteq U''$, then
\[ \rmap{U'}{U} \circ \rmap{U''}{U'} = \rmap{U''}{U} . \]
\end{itemize}
The \myemph{nerve} $\NN(\UU)$ of the cover $\UU$ is defined to be the abstract simplicial complex comprising those finite subsets of $\UU$ with non-empty intersection. Concretely, we take a $q$-simplex to be a list $\sigma = (U_0, \ldots , U_q)$ of elements of $\UU$, with $| \sigma | := \cap_{i=0}^q U_i \neq \vn$. Thus a $0$-simplex $(U)$ is a single element of the cover $\UU$.
We write $\NN(\UU)^q$ for the set of $q$-simplices.


Given a $q+1$-simplex $\sigma = (U_0, \ldots , U_{q+1})$, there are  $q$-simplices
\[ \bdmap_j (\sigma) :=  (U_0, \ldots, \widehat{U_j}, \ldots , U_{q+1}), \qquad 0 \leq j \leq q \]
obtained by omitting one of the elements of the $q+1$-simplex. Note that:
\[ | \sigma | \; \subseteq \; | \bdmap_j(\sigma) | . \]

We shall now define the \myemph{\Cech cochain complex}.
For each $q \geq 0$, we define the abelian group $\Ccoch{q}$:
\[ \Ccoch{q} \; := \; \prod_{\sigma \in \NN(\UU)^q} \FF( | \sigma |) . \]

We also define the \myemph{coboundary maps} 
\[ \cobd^{q} : \Ccoch{q} \lrarr \Ccoch{q+1} . \]
For $\omega = (\omega(\tau))_{\tau \in \NN(\UU)^q} \in \Ccoch{q}$, and $\sigma \in \NN(\UU)^{q+1}$, we define:
\[ \cobd^{q}(\omega)(\sigma) \; := \; \sum_{j = 0}^{q} (-1)^j \rmap{|\bdmap_j(\sigma)|}{|\sigma|}\omega(\bdmap_j \sigma) . \]
For each $q$, $\cobd^q$ is a group homomorphism.

We shall also consider the \myemph{augmented complex} $\mathbf{0} \rarr \Ccoch{0} \rarr \cdots$.

\begin{proposition}
\label{boundprop}
For each $q$, $\cobd^{q+1} \circ \cobd^q = 0$.
\end{proposition}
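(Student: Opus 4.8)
The plan is to prove the identity pointwise, by evaluating the composite on an arbitrary simplex $\sigma \in \NN(\UU)^{q+2}$ and showing that $(\cobd^{q+1} \circ \cobd^q)(\omega)(\sigma) = 0$ in the abelian group $\FF(|\sigma|)$, for every $\omega \in \Ccoch{q}$. First I would unfold the definition of $\cobd$ twice. Writing $\eta := \cobd^q(\omega) \in \Ccoch{q+1}$, we have $\cobd^{q+1}(\eta)(\sigma) = \sum_i (-1)^i \rmap{|\bdmap_i(\sigma)|}{|\sigma|} \eta(\bdmap_i \sigma)$, and expanding $\eta(\bdmap_i\sigma) = \cobd^q(\omega)(\bdmap_i\sigma) = \sum_j (-1)^j \rmap{|\bdmap_j\bdmap_i\sigma|}{|\bdmap_i\sigma|}\omega(\bdmap_j\bdmap_i\sigma)$ yields a double sum of terms of the form $(-1)^{i+j}\,\rmap{|\bdmap_i(\sigma)|}{|\sigma|}\,\rmap{|\bdmap_j\bdmap_i\sigma|}{|\bdmap_i\sigma|}\,\omega(\bdmap_j\bdmap_i\sigma)$.

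Next I would use the functoriality of the restriction maps assumed in the setup, namely $\rmap{U'}{U}\circ\rmap{U''}{U'} = \rmap{U''}{U}$ for $U \subseteq U' \subseteq U''$. Since $|\sigma| \subseteq |\bdmap_i(\sigma)| \subseteq |\bdmap_j\bdmap_i\sigma|$, the two restriction maps in each term compose to the single map $\rmap{|\bdmap_j\bdmap_i\sigma|}{|\sigma|}$, which depends only on the codimension-two face $\bdmap_j\bdmap_i\sigma$ and on $\sigma$, not on the order in which the two vertices were deleted. The double sum thus becomes a sum over all ways of deleting two vertices from $\sigma$, each term being a signed restriction of a value of $\omega$ on the resulting face.

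The combinatorial heart of the argument is the simplicial face identity: for $i < j$, one has $\bdmap_i \bdmap_j = \bdmap_{j-1}\bdmap_i$. This follows immediately from the definition of $\bdmap$ as omission of a vertex, by checking which vertices of $\sigma$ survive on either side. Using it, each codimension-two face of $\sigma$ is produced exactly twice in the double sum: once from an index pair with the smaller index deleted second, and once with it deleted first. I would split the double sum into the regime with indices in increasing order and its complement, reindex the latter via the identity, and observe that the two matching terms carry signs differing by the index shift $j \mapsto j-1$, hence by a factor of $-1$. Since their common restriction of $\omega$ agrees by the previous paragraph, the terms cancel in pairs, leaving $0$.

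The step requiring the most care is the sign bookkeeping in this final pairing: one must track the index shift induced by the simplicial identity and confirm that it produces exactly the sign flip needed for cancellation, and that the ranges of summation match up so that every term is accounted for with a unique partner. Everything else — that $\cobd^{q+1}$ and $\cobd^q$ are group homomorphisms, and that the composites of restriction maps are well defined — is immediate from the hypotheses on $\FF$, so no genuine obstacle arises beyond this routine but error-prone combinatorial matching.
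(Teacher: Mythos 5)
Your argument is correct, and it is the standard proof that the \v{C}ech coboundary squares to zero: expand the double sum, use functoriality of the restriction maps to see that each term depends only on the codimension-two face $\bdmap_j\bdmap_i\sigma$, invoke the simplicial identity $\bdmap_i\bdmap_j = \bdmap_{j-1}\bdmap_i$ for $i<j$, and cancel terms in pairs via the sign flip induced by the index shift. The paper itself offers no proof of this proposition --- it is stated as a standard fact --- so there is nothing to compare against; your write-up supplies exactly the argument the authors are implicitly relying on. One caveat worth flagging: taken literally, the paper's formula $\cobd^{q}(\omega)(\sigma) = \sum_{j=0}^{q}(-1)^j \rmap{|\bdmap_j(\sigma)|}{|\sigma|}\omega(\bdmap_j\sigma)$ for a $(q+1)$-simplex $\sigma$ indexes the faces only up to $j=q$, omitting $\bdmap_{q+1}$; with that truncated sum the identity $\cobd^{q+1}\circ\cobd^{q}=0$ would fail (already $\cobd^0$ would have a single term, contradicting the two-term expression $r_i|C_i\cap C_j - r_j|C_i\cap C_j$ computed in Proposition~3.2). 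The intended convention is clearly the standard one with $0\le j\le q+1$, which is what your proof uses, so your pairing of index ranges is the right one --- but you should state explicitly that you are working with the full set of $q+2$ faces of a $(q+1)$-simplex, since that is where the cancellation lives.
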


We define $\Cocyc{q}$, the \myemph{$q$-cocycles}, to be the kernel of $\cobd^{q}$. We define $\Cobound{q}$, the \myemph{$q$-coboundaries}, to be the image of $\cobd^{q-1}$. These are subgroups of $\Ccoch{q}$, and by Proposition~\ref{boundprop}, $\Cobound{q} \subseteq \Cocyc{q}$.
We define the \myemph{$q$-th \Cech cohomology group} $\Cohom{q}$ to be the quotient group 
\[ \Cohom{q} \; := \; \Cocyc{q} / \Cobound{q} . \]
Note that $\Cobound{0} = \mathbf{0}$, so $\Cohom{0} \cong \Cocyc{0}$.

Given a cocycle $z \in \Cocyc{q}$, the \myemph{cohomology class} $[z]$ is the image of $z$ under the canonical map
\[ \Cocyc{q} \lrarr \Cohom{q} . \]

A \myemph{compatible family} with respect to a cover $\UU$ is a family $\{ r_i \in \FF(U_i) \}$ for $U_i \in \UU$, such that, for all $i$, $j$:
\[ r_i | U_i \cap U_j = r_j | U_i \cap U_j . \]
\begin{proposition}
\label{compfamprop}
There is a bijection between compatible families and elements of the zeroth cohomology group $\Cohom{0}$.
\end{proposition}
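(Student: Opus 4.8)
The plan is to unwind the definitions on both sides and exhibit an explicit bijection, then check it is well-defined in each direction. Recall that $\Cohom{0} \cong \Cocyc{0}$ since $\Cobound{0} = \mathbf{0}$, so it suffices to identify $\Cocyc{0}$ with the set of compatible families. An element $\omega \in \Ccoch{0} = \prod_{(U) \in \NN(\UU)^0} \FF(|(U)|)$ is precisely a family $(\omega(U))_{U \in \UU}$ with $\omega(U) \in \FF(U)$, since $|(U)| = U$ for a $0$-simplex. So the underlying data of a $0$-cochain is already exactly the underlying data of a ``family'' $\{ r_i \in \FF(U_i) \}$; the content of the proposition is that the cocycle condition matches the compatibility condition.

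The key computation is to expand $\cobd^0(\omega)(\sigma) = 0$ for a general $1$-simplex. First I would take $\sigma = (U_i, U_j)$ with $|\sigma| = U_i \cap U_j \neq \vn$. Its faces are $\bdmap_0(\sigma) = (U_j)$ and $\bdmap_1(\sigma) = (U_i)$, so by the definition of the coboundary map,
\[
\cobd^0(\omega)(U_i, U_j) \; = \; \rmap{U_j}{U_i \cap U_j} \omega(U_j) \; - \; \rmap{U_i}{U_i \cap U_j} \omega(U_i) .
\]
Writing $r_i := \omega(U_i)$ and using the notation $r | U_i \cap U_j$ for the restriction $\rmap{U_i}{U_i \cap U_j}$, this says $\cobd^0(\omega)(U_i, U_j) = r_j | U_i \cap U_j - r_i | U_i \cap U_j$. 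Hence $\omega \in \Cocyc{0}$, \ie $\cobd^0(\omega) = 0$ on every $1$-simplex, holds if and only if $r_i | U_i \cap U_j = r_j | U_i \cap U_j$ for all $i, j$ with $U_i \cap U_j \neq \vn$, which is precisely the compatibility condition.

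The bijection is then $\omega \mapsto \{ \omega(U_i) \}$ in one direction and $\{ r_i \} \mapsto (r_i)_{U_i \in \UU}$ in the other; these are mutually inverse on the underlying products, and the computation above shows each restricts to a bijection between $\Cocyc{0}$ and the compatible families. Composing with the isomorphism $\Cohom{0} \cong \Cocyc{0}$ gives the result. I do not expect a serious obstacle here: the only point requiring care is the bookkeeping of the face maps and signs, and in particular noting that the compatibility condition is stated symmetrically in $i$ and $j$ while the cocycle condition produces a signed difference — but since $\FF(|\sigma|)$ is an abelian group, vanishing of $r_j | U_i \cap U_j - r_i | U_i \cap U_j$ is equivalent to the equation $r_i | U_i \cap U_j = r_j | U_i \cap U_j$, so the two conditions coincide. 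One should also observe that the case $i = j$ is vacuous and that simplices with empty intersection do not appear in $\NN(\UU)$, so no constraints are imposed beyond those of compatibility.
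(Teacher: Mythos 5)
Your argument is correct and follows essentially the same route as the paper's own proof: identify $0$-cochains with families, compute $\cobd^0$ on a $1$-simplex $(U_i,U_j)$ as the signed difference of the two restrictions, and observe that its vanishing is exactly compatibility, with $\Cohom{0}\cong\Cocyc{0}$ finishing the job. The extra care you take about face-map bookkeeping and the (irrelevant) sign is fine and does not change the argument.
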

\begin{proof}
Cochains $c = (r_i)_{U_i \in \UU}$ in $\Ccoch{0}$ correspond to families $\{ r_i \in \FF(U_i) \}$.
For each $1$-simplex $\sigma = (C_i,C_j)$, 
\[ \cobd^{0}(c)(\sigma) \; = \; r_i | C_i \cap C_j \; - \; r_j | C_i \cap C_j . \]
Hence $\cobd^{0}(c) = 0$ if and only if the corresponding family is compatible.
\end{proof}

We shall also use the  \textit{relative cohomology} of $\FF$ with respect to an open subset $U \subseteq X$.

We define two auxiliary presheaves related to $\FF$. Firstly, $\FF | U$ is defined by
\[ \FF | U (V) := \FF(U \cap V) . \]
There is an evident presheaf morphism 
\[ p : \FF \lrarr \FF | U :: p_V : r \mapsto r | U \cap V . \]
Then $\FU$ is defined by $\FU(V) := \ker(p_V)$. Thus we have an exact sequence of presheaves
\[ \mathbf{0} \rTo \FU \rTo \FF \rTo^{p} \FF | U . \]
The relative cohomology of $\FF$ with respect to $U$ is defined to be the cohomology of the presheaf $\FU$.

We have the following refined version of Proposition~\ref{compfamprop}.
\begin{proposition}
\label{relcompfamprop}
For any $U_i \in \UU$,
the elements of the  relative cohomology group $\CoFUi{0}$ correspond bijectively to compatible families $\{ r_j \}$ such that $r_i = 0$.
\end{proposition}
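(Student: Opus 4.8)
The plan is to combine the two characterizations already available: Proposition~\ref{compfamprop}, which identifies $\Cohom{0}$ with compatible families of $\FF$, and the definition of $\FUi$ as the kernel presheaf $\FF_{\bar{U}_i}(V) = \ker(p_V)$, where $p_V : \FF(V) \rarr \FF|U_i(V) = \FF(U_i \cap V)$ sends $r \mapsto r|U_i \cap V$. Since $\Cohom{0} \cong \Cocyc{0}$ (because $\Cobound{0} = \mathbf{0}$), it suffices to show that the zero-cocycles of the presheaf $\FUi$ are exactly the compatible families $\{r_j\}$ of $\FF$ satisfying $r_i = 0$.

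First I would apply Proposition~\ref{compfamprop} verbatim, but with $\FF$ replaced by $\FUi$, to obtain that $\CoFUi{0}$ is in bijection with compatible families $\{r_j \in \FUi(U_j)\}$ in the presheaf $\FUi$. The task then reduces to unwinding what membership in $\FUi$ means and checking that the compatibility condition is inherited from $\FF$. Concretely, $r_j \in \FUi(U_j) = \ker(p_{U_j})$ means $r_j \in \FF(U_j)$ with $r_j | U_i \cap U_j = 0$. I would observe that the coboundary map and restriction maps of $\FUi$ are simply those of $\FF$ restricted to the kernel subgroups, so that a family is compatible in $\FUi$ precisely when it is compatible in $\FF$ and, additionally, each component lies in the appropriate kernel.

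The key step is to show that the family of kernel conditions $r_j | U_i \cap U_j = 0$ for all $j$ is equivalent to the single condition $r_i = 0$, given that $\{r_j\}$ is already compatible in $\FF$. One direction is immediate: if $r_i = 0$, then for each $j$, compatibility gives $r_j | U_i \cap U_j = r_i | U_i \cap U_j = 0|U_i\cap U_j = 0$, so every component lies in the relevant kernel. For the converse, taking $j = i$ in the kernel condition yields $r_i | U_i \cap U_i = r_i | U_i = r_i = 0$, using functoriality ($\rmap{U_i}{U_i} = \id$). Thus the condition $r_i | U_i \cap U_j = 0$ for all $j$, restricted to the diagonal case, already forces $r_i = 0$, and conversely $r_i = 0$ recovers all the kernel conditions via compatibility.

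I expect the main (though modest) obstacle to be bookkeeping: verifying carefully that $\FUi$ inherits its structure as a subpresheaf of $\FF$ so that Proposition~\ref{compfamprop} applies without modification, and that the restriction maps genuinely preserve the kernels (this is automatic from functoriality, since $U_i \cap U_j \cap U_k \subseteq U_i \cap U_j$ gives the needed commutativity). The conceptual content is entirely in the diagonal-evaluation trick above; once that equivalence is established, the bijection is just a relabelling of the bijection from Proposition~\ref{compfamprop}.
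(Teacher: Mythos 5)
Your proposal is correct and follows essentially the same route as the paper: both reduce to Proposition~\ref{compfamprop}, unwind the kernel condition $r_j | U_i \cap U_j = 0$, use compatibility to derive it from $r_i = 0$, and use the diagonal case $p_{U_i}(r_i) = r_i | U_i = r_i$ for the converse. Your version merely spells out the bookkeeping (that $\FUi$ is a subpresheaf whose restriction maps and coboundaries are inherited from $\FF$) that the paper leaves implicit.
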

\begin{proof}
By Proposition~\ref{compfamprop}, compatible families correspond to cocycles 
$r = (r_j)$ in $\Ccoch{0}$.  By compatibility, $r_i | C_i \cap C_j =  r_j | C_i \cap C_j$ for all $j$.
Hence $r$ is  in $C^{0}(\UU, \FUi)$ if and only if $r_i = p_{U_{i}}(r_i) = 0$.
\end{proof}

\section{Cohomological Obstructions}

Given a commutative ring $R$, we define a functor $\FR : \Set \lrarr \Set$.
For any set $X$, the \myemph{support} $\supp(\phi)$ of a function $\phi : X \rarr R$ is the set of $x \in X$ such that $\phi(x) \neq 0$. We define $\FR(X)$ to be the set of functions $\phi : X \rarr R$ of finite support. There is an embedding $x \mapsto 1 \cdot x$ of $X$ in $\FR(X)$, which we shall use implicitly throughout.

Given $f : X \rarr Y$, we define:
\[ \FR f : \FR X \lrarr \FR Y :: \phi \mapsto [y \mapsto  \sum_{f(x) = y} \phi(x)] . \]
This assignment is easily seen to be functorial.

In fact, $\FR(X)$ is the \myemph{free $R$-module generated by $X$}, and in particular, it is an abelian group; while $\FR(f)$ is a group homomorphism for any function $f$. In particular, taking $R = \ZZ$, $\FZ(X)$ is the \myemph{free abelian group generated by $X$}.

Thus, given any presheaf of sets $P$ on $X$, we obtain a presheaf of abelian groups $\FZ P$ by composition: $\FZ P(U) := \FZ(P(U))$.

Given an empirical model $e$ defined on the cover $\UU$, we shall work with the \Cech cohomology groups $\Cohom{q}$ for the abelian presheaf $\FF := \FZ S_e$. Note that, for any set of measurements $U$, $\FF(U)$ is the set of \myemph{formal $\ZZ$-linear combinations of sections} in the support of $e_U$.

To each $s \in S_e(C)$, we shall associate an element $\obst(s)$ of a cohomology group, which can be regarded as an obstruction to $s$ having an extension within the support of $e$ to a global section. In particular, the existence of such an extension implies that the obstruction vanishes. In good cases, these two conditions are equivalent, yielding  \myemph{cohomological characterizations} of contextuality and strong contextuality.

For notational convenience, we shall fix an element $s = s_1 \in S_e(C_1)$. Because of the compatibility of the family $\{ e_C \}$, which is equivalent to no-signalling \cite{abramsky2011unified}, there is a family $\{ s_i \in S_e(C_i) \}$ with $s_1 | C_1 \cap C_i = s_i | C_1 \cap C_i$, $i = 2, \ldots , n$.


We define the cochain $c := (s_1, \ldots , s_n) \in \Ccoch{0}$. The coboundary of this cochain is $z := \cobd^0(c)$.

\begin{proposition}
The coboundary $z$ of $c$  vanishes under restriction to $C_1$, and hence is a cocycle in the relative cohomology with respect to $C_1$.
\end{proposition}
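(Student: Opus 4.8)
The plan is to treat the two assertions in turn: first, that the coboundary $z$ lies in the relative cochain group $C^1(\UU, \FC)$ (this is the content of ``$z$ vanishes under restriction to $C_1$''), and second, that once this is established $z$ is automatically a cocycle in the relative complex.

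First I would compute $z$ on a generic $1$-simplex. Exactly as in the proof of Proposition~\ref{compfamprop}, for $\sigma = (C_i, C_j)$ we have
\[ z(\sigma) \; = \; \cobd^0(c)(\sigma) \; = \; s_i | C_i \cap C_j \; - \; s_j | C_i \cap C_j \; \in \; \FF(C_i \cap C_j), \]
where each $s_k$ is read through the embedding as the generator $1 \cdot s_k$ of $\FZ(S_e(C_k))$, and restriction in $\FF = \FZ S_e$ is $\FZ$ applied to restriction of sections.

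The heart of the argument is to show that $p_{|\sigma|}(z(\sigma)) = z(\sigma) | C_1 \cap C_i \cap C_j = 0$, \ie that $z(\sigma) \in \ker(p_{|\sigma|}) = \FC(|\sigma|)$. I would restrict the displayed expression for $z(\sigma)$ to $C_1 \cap C_i \cap C_j$, using functoriality of restriction to rewrite $(s_i | C_i \cap C_j) | C_1 \cap C_i \cap C_j$ as the single restriction $s_i | C_1 \cap C_i \cap C_j$, and similarly for $s_j$. The chosen family satisfies $s_i | C_1 \cap C_i = s_1 | C_1 \cap C_i$ for every $i$; restricting this identity further to $C_1 \cap C_i \cap C_j$ gives $s_i | C_1 \cap C_i \cap C_j = s_1 | C_1 \cap C_i \cap C_j$, and likewise $s_j | C_1 \cap C_i \cap C_j = s_1 | C_1 \cap C_i \cap C_j$. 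Hence the two terms of $z(\sigma)$ restrict to the same generator and cancel, so that $z(\sigma) | C_1 \cap C_i \cap C_j = 0$. Since this holds for every $1$-simplex, $z$ lies in $C^1(\UU, \FC)$.

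The cocycle condition is then immediate: because $z = \cobd^0(c)$, Proposition~\ref{boundprop} gives $\cobd^1(z) = \cobd^1 \circ \cobd^0(c) = 0$ in the ambient complex, and since $\FC$ is a subpresheaf of $\FF$ the coboundary of the relative complex is just the ambient coboundary restricted to the relative cochain groups, so $z$ is a cocycle there as well (its coboundary being $0$, it lies in any subgroup). I expect no real obstacle, only one point deserving care: the computation uses solely the agreement of each $s_i$ with $s_1$ along $C_1 \cap C_i$, not full pairwise compatibility of the family. This is precisely why the cancellation occurs after restriction to $C_1$ but not globally---were the family fully compatible, $z$ would vanish outright---and it is exactly this controlled failure of $z$ to vanish off $C_1$ that later allows $[z]$ to serve as a genuine obstruction.
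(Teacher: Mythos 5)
Your proposal is correct and follows essentially the same route as the paper's proof: you restrict each term of $z(\sigma) = s_i | C_i \cap C_j - s_j | C_i \cap C_j$ to $C_1 \cap C_i \cap C_j$, use the defining agreement $s_i | C_1 \cap C_i = s_1 | C_1 \cap C_i$ together with functoriality of restriction to show both terms coincide with $s_1 | C_1 \cap C_i \cap C_j$ and cancel, and then observe that the relative coboundary is the restriction of the ambient one so that $\cobd^1 z = \cobd^1 \cobd^0 c = 0$. The closing remark correctly identifies that only agreement with $s_1$ along $C_1$ is used, which matches the paper's appeal to the no-signalling property of the chosen family.
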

\begin{proof}
We write $\Cij := C_i \cap C_j$. For all $i, j$, we define $z_{i,j} := z(\Cij) = s_i | \Cij  - s_j | \Cij$. Because of the no-signalling assumption on the family $\{ s_i \}$, for all $i, j$,
\[ s_i | C_1 \cap \Cij = (s_1 | C_1 \cap C_i) | C_j = s_1 | C_1 \cap \Cij . \]
Similarly, $s_j | C_1 \cap \Cij =  s_1 | C_1 \cap \Cij$. Hence $z_{i,j} | C_1 \cap \Cij = 0$, and 
$z_{i,j}  \in \FC(C_i \cap C_j)$.
Thus $z = (z_{i,j})_{i,j} \in C^{1}(\UU, \FC)$. 

Note that $\cobd^{1} : C^{1}(\UU, \FC) \rarr C^{2}(\UU, \FC)$ is the restriction of the coboundary map on $\Ccoch{1}$. Hence  $z = \cobd^0(c)$  is a cocycle.
\end{proof}

We  define $\obst(s_1)$ as the cohomology class $[z] \in \CoFC{1}$.

\paragraph{Remark}
There is a more conceptual way of defining this obstruction, using the connecting homomorphism from the long exact sequence of cohomology; see \cite{ghrist2011applications}.
We have given a more concrete formulation, which may be easier to grasp, and is also convenient for computation.

\vspace{.1in}
Note that, although  $z = \cobd^0(c)$, it  is not necessarily a coboundary in $C^{1}(\UU, \FC)$, since $c$ is not a cochain in  $C^{0}(\UU, \FC)$, as $p_{C_i}(s_i) = s_i | C_1 \cap C_i  \neq 0$. Thus in general, we need not have $[z] = 0$.

\begin{proposition}
\label{obstprop}
The following are equivalent:
\begin{enumerate}
\item The cohomology obstruction vanishes: $\obst(s_1) = 0$.
\item There is a family $\{ r_i \in \FF(C_i) \}$ with $s_1 = r_1$, and for all $i, j$:
\[ r_i | C_i \cap C_j = r_j | C_i \cap C_j . \]
\end{enumerate}
\end{proposition}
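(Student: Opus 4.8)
The plan is to translate the statement $\obst(s_1) = [z] = 0$ in $\CoFC{1}$ directly into its definition and then exhibit an explicit, reversible dictionary between the two conditions. By definition, $[z] = 0$ holds exactly when $z$ is a coboundary in the relative complex, \ie when there is a cochain $b = (b_i) \in C^{0}(\UU, \FC)$ with $z = \cobd^{0}(b)$. Since $\FC(C_i) = \ker(p_{C_i})$ consists of those $t \in \FF(C_i)$ with $t | C_1 \cap C_i = 0$, membership $b_i \in \FC(C_i)$ is the constraint $b_i | C_1 \cap C_i = 0$; in particular $b_1 = 0$, because $C_1 \cap C_1 = C_1$. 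Spelled out, the equation $z = \cobd^{0}(b)$ reads $z_{i,j} = b_i | \Cij - b_j | \Cij$ for all $i, j$, where $z_{i,j} = s_i | \Cij - s_j | \Cij$.

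First I would prove that (1) implies (2). Given such a $b$, I would set $r_i := s_i - b_i \in \FF(C_i)$. From $b_1 = 0$ we obtain $r_1 = s_1$. Substituting the value of $z_{i,j}$ into the coboundary equation and rearranging yields $(s_i - b_i) | \Cij = (s_j - b_j) | \Cij$, which is precisely $r_i | \Cij = r_j | \Cij$. Thus $\{ r_i \}$ is the required compatible family with $r_1 = s_1$.

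For the converse, I would run the same dictionary backwards: given a compatible family $\{ r_i \}$ with $r_1 = s_1$, define $b_i := s_i - r_i$ and verify both that $b \in C^{0}(\UU, \FC)$ and that $\cobd^{0}(b) = z$. The coboundary computation is an immediate rearrangement from the definitions, so the genuine content is checking $b_i \in \FC(C_i)$, \ie $b_i | C_1 \cap C_i = 0$. This is the step I expect to be the main obstacle, or at least the only point where the hypotheses really enter: I would combine the no-signalling condition on $\{ s_i \}$, giving $s_i | C_1 \cap C_i = s_1 | C_1 \cap C_i$, with the compatibility of $\{ r_i \}$ applied to the pair $(i,1)$ together with $r_1 = s_1$, giving $r_i | C_1 \cap C_i = s_1 | C_1 \cap C_i$; subtracting these two identities gives $b_i | C_1 \cap C_i = 0$. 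Everything else is routine bookkeeping with restriction maps, and since the two assignments $r_i = s_i - b_i$ and $b_i = s_i - r_i$ are manifestly mutually inverse substitutions, the equivalence follows.
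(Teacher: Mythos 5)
Your proposal is correct and follows essentially the same route as the paper's own proof: both unfold the vanishing of $[z]$ into the existence of a relative $0$-cochain $c'$ with $\cobd^0(c') = z$, pass between $c'$ and the family via $r_i = s_i - c'_i$, and in the converse direction verify membership in $C^{0}(\UU,\FC)$ by combining no-signalling for the $s_i$ with compatibility of the $r_i$. Your write-up is if anything slightly more explicit about that last verification, which the paper states tersely.
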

\begin{proof}
The obstruction vanishes if and only if there is a cochain $c' = (c'_1, \ldots , c'_n) \in C^{0}(\UU, \FC)$  with $\cobd^{0}(c') = \cobd^0(c)$, or equivalently $\cobd^{0}(c - c')  = 0$, \ie such that $c - c'$ is a cocycle.
By Proposition~\ref{compfamprop}, this is equivalent to $\{ r_i := s_i - c'_i \}$ forming a compatible family.
Moreover, $c' \in C^{0}(\UU, \FC)$ implies $c'_1 = p_{C_1}(c'_1) = 0$, so  $r_1 = s_1$.

For the converse, suppose we have a family $\{ r_i \in \FF(C_i) \}$ as in (2).
We define $c' := (c'_1, \ldots , c'_n)$, where $c'_i := s_i - r_i$. 
Since $r_1 = s_1$, $p_{C_i}(c'_i) = s_1 | C_{1,i} - r_1 | C_{1,i} = 0$ for all $i$, and $c' \in C^{0}(\UU,\FC)$.
We must show that $\cobd^{0}(c') = z$, \ie that $z_{i,j} = c'_i| \Cij - c'_j | \Cij$. This holds since $r_i | \Cij = r_j | \Cij$.
\end{proof}

As an immediate application to contextuality, we have the following.

\begin{proposition}
If the model $e$ is possibilistically extendable, then the obstruction vanishes for every section in the support of the model. If $e$ is not strongly contextual, then the obstruction vanishes for some section in the support.
\end{proposition}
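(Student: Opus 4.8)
The plan is to observe that both assertions reduce to a single lemma: if a section $s$ in the support belongs to a compatible family of \emph{genuine} sections $\{ s_i \in S_e(C_i) \}$ — that is, a possibilistic extension in the sense of Section~2 — then the obstruction $\obst(s)$ vanishes. Granting this lemma, possibilistic extendability supplies such a family for \emph{every} $s$, giving the first claim, while the failure of strong contextuality supplies such a family for \emph{some} $s$, giving the second. Since the construction of $\obst$ was set up (for notational convenience) around a distinguished $s = s_1 \in S_e(C_1)$, I would first note that the same construction applies verbatim to any section in any $S_e(C)$ after relabelling the cover so that $C$ becomes $C_1$; thus it suffices to treat $s = s_1$.

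The crux is the passage from a compatible family of genuine sections to a compatible family in the abelian presheaf $\FF = \FZ S_e$. I would use the canonical embedding $x \mapsto 1 \cdot x$ of a set into its free abelian group, setting $r_i := 1 \cdot s_i \in \FF(C_i)$. The one point requiring care — and the main thing to verify — is that this embedding is natural with respect to restriction: since the restriction maps on $\FF$ are obtained by applying the functor $\FZ$ to the restriction maps of $S_e$, and $\FZ$ sends a function to the pushforward of finitely-supported functions, the image $1 \cdot s_i$ restricts to $1 \cdot (s_i | U)$. Consequently the equalities $s_i | \Cij = s_j | \Cij$ in $S_e$ (compatibility of the genuine family) transport to $r_i | \Cij = r_j | \Cij$ in $\FF$, so $\{ r_i \}$ is a compatible family with $r_1 = 1 \cdot s_1 = s_1$ under the implicit embedding.

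With the compatible family $\{ r_i \}$ in hand, condition~(2) of Proposition~\ref{obstprop} is met exactly, and that proposition yields $\obst(s_1) = 0$, establishing the lemma. For the first claim I would then invoke the definition of possibilistic extendability: every $s \in S_e(C)$ is a member of a compatible family, so $\obst$ vanishes on every section in the support. For the second, I would recall that $e$ failing to be strongly contextual means precisely that the condition ``for every $s$ there is no such family'' fails, \ie there exists at least one $s$ lying in a compatible family; the lemma then gives $\obst(s) = 0$ for that section.

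I expect no serious obstacle here: essentially all of the content is already packaged in Proposition~\ref{obstprop}, and the only genuine verification is the naturality of the free-abelian-group embedding under restriction. The subtlety worth flagging is simply that genuine possibilistic extensions are families in $S_e$, whereas Proposition~\ref{obstprop}(2) asks for a family in $\FZ S_e$; bridging this gap via the unit $x \mapsto 1 \cdot x$, and checking that it commutes with restriction, is the whole argument.
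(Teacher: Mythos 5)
Your proposal is correct and follows essentially the same route as the paper: embed the genuine compatible family $\{s_i\}$ into $\FF = \FZ S_e$ via $x \mapsto 1\cdot x$ and invoke Proposition~\ref{obstprop}, applying the same argument to a single witnessing section for the second claim. The only difference is that you spell out the naturality of the embedding under restriction, which the paper leaves implicit.
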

\begin{proof}
If $e$ is possibilistically extendable, then for every $s \in S_e(C_i)$, there is a compatible family $\{ s_j \in S_e(C_j) \}$ with $s = s_i$. Applying the embedding of $S_e(C_j)$ into $\FF(C_j)$, by Proposition~\ref{obstprop} we conclude that $\obst(s) = 0$.
The same argument can be applied to a single section witnessing the failure of strong contextuality.
\end{proof}

Thus we have a \textit{sufficient condition} for contextuality in the non-vanishing of the obstruction.
The non-necessity of the condition arises from the possibility of `false positives': families $\{ r_i \in \FF(C_i) \}$ which do not determine a \textit{bona fide} global section in $\EE(X)$.

\section{Examples}

\subsection*{Example: Hardy}

We begin with an example to show that false positives do indeed arise.

We consider the Hardy model \cite{hardy1993nonlocality}; the support is given as follows.
\begin{center}
\renewcommand{\arraystretch}{1.2}
\begin{tabular}{l|cccc} 
 &  $(0, 0)$ & $(0, 1)$ & $(1, 0)$ & $(1, 1)$ \\ \hline
$(a, b)$ &  $1$ &  $1$ &  $1$ &  $1$ \\
$(a, b')$ &   $0$ &  $1$ &  $1$ &  $1$ \\
$(a', b)$ &  $0$ &  $1$ &  $1$ & $1$ \\
$(a', b')$ &   $1$ &  $1$ &  $1$ & $0$ \\
\end{tabular}
\end{center}

\vspace{.2in}
\noindent We enumerate the sections as follows:
\begin{center}
\renewcommand{\arraystretch}{1.2}
\begin{tabular}{l||c|c|c|c|}
& $(0, 0)$ & $(0, 1)$ & $(1, 0)$ & $(1, 1)$    \\  \hline \hline
$(a, b)$ & $s_1$ & $s_2$ & $s_3$ & $s_4$  \\  \hline
$(a, b')$ & $s_5$ & $s_6$ & $s_7$ & $s_8$  \\  \hline
$(a', b)$ & $s_9$ & $s_{10}$ & $s_{11}$ & $s_{12}$  \\  \hline
$(a', b')$ & $s_{13}$ & $s_{14}$ & $s_{15}$ & $s_{16}$  \\  \hline
\end{tabular}
\end{center}

The section $s_1$ provides a witness for the non-locality of the Hardy model. It is not a member of any  compatible family of sections in the support. However, we have the following family of $\ZZ$-linear combinations of sections:
\[ r_1 = s_1, \quad r_2 = s_6 + s_7 - s_8, \quad r_3 = s_{11}, \quad r_4 = s_{15} . \]
One can check that
{\renewcommand{\arraystretch}{1.2}\[
 \begin{array}{lclcl}
r_2 | a & = &  1 \cdot (a \mapsto 0) + 1 \cdot (a \mapsto 1) - 1 \cdot (a \mapsto 1) & = &  r_1 | a , \\
r_2 | b' & = &  1 \cdot (b' \mapsto 1) + 1 \cdot (b' \mapsto 0) - 1 \cdot (b' \mapsto 1) & = & r_4 | b' . 
\end{array}
\]}
Thus this family meets the conditions of Proposition~\ref{obstprop}, and the obstruction $\obst(s_1)$ vanishes.

\subsection*{Example: PR-Box}

There is better news when we look at the PR-box:
\begin{center}
\renewcommand{\arraystretch}{1.2}
\begin{tabular}{l|ccccc}
& $(0, 0)$ & $(0, 1)$ & $(1, 0)$ & $(1, 1)$  &  \\ \hline
$(a, b)$ & $1$ & $0$ & $0$ & $1$ & \\
$(a, b')$ & $1$ & $0$ & $0$ & $1$ & \\
$(a', b)$ & $1$ & $0$ & $0$ & $1$ & \\
$(a', b')$ & $0$ & $1$ & $1$ & $0$ & 
\end{tabular}
\end{center}
This is a strongly contextual model \cite{abramsky2011unified}, so no section in the support is a member of a compatible family. 
The coefficients for a candidate family $\{ r_i \}$  can be displayed as follows:
\begin{center}
\renewcommand{\arraystretch}{1.2}
\begin{tabular}{l|ccccc}
& $00$ & $01$ & $10$ & $11$  &  \\ \hline
$AB$ & $a$ & $0$ & $0$ & $b$ & \\
$AB'$ & $c$ & $0$ & $0$ & $d$ & \\
$A'B$ & $e$ & $0$ & $0$ & $f$ & \\
$A'B'$ & $0$ & $g$ & $h$ & $0$ & 
\end{tabular}
\end{center}
The constraints arising from the requirements that  $r_i | \Cij = r_j | \Cij$ are:
\[ a=c, \quad b=d, \quad a = e, \quad b = f, \quad c = h, \quad  d = g, \quad e = g, \quad f = h . \]
These imply that all the variables are equal.

Checking that a section in the support is a member of such a family amounts to assigning $1$ to the variable labelling that section, and $0$ to the other variable in its row.
Clearly such an assignment is incompatible with the above constraints, since it implies $1 = 0$.

Hence there can be no such family, and the obstruction does not vanish for any section in the support, witnessing the strong contextuality of the PR box.

\subsection*{Example: GHZ}

The previous example suggests looking at GHZ, which is also strongly contextual, and of course is realizable in quantum mechanics.

The support for (the relevant part of) GHZ is as follows:
\begin{center}
\renewcommand{\arraystretch}{1.2}
\begin{tabular}{c|cccccccc}
&  $000$ & $001$ & $010$ & $011$  & $100$ & $101$ & $110$ & $111$  \\ \hline
$ABC$ & $1$ & $0$ & $0$ & $1$ & $0$ & $1$ & $1$ & $0$  \\
$AB'C'$ & $0$ & $1$ & $1$ & $0$ & $1$ & $0$ & $0$ & $1$  \\
$A'BC'$ & $0$ & $1$ & $1$ & $0$ & $1$ & $0$ & $0$ & $1$  \\
$A'B'C$ & $0$ & $1$ & $1$ & $0$ & $1$ & $0$ & $0$ & $1$ 
\end{tabular}
\end{center}

We display the coefficients for a candidate family as follows:
\begin{center}
\renewcommand{\arraystretch}{1.2}
\begin{tabular}{c|cccccccc}
& $000$ & $001$ & $010$ & $011$  & $100$ & $101$ & $110$ & $111$   \\ \hline
$ABC$ & $a$ & $0$ & $0$ & $b$ & $0$ & $c$ & $d$ & $0$  \\
$AB'C'$ & $0$ & $e$ & $f$ & $0$ & $g$ & $0$ & $0$ & $h$  \\
$A'BC'$ & $0$ & $i$ & $j$ & $0$ & $k$ & $0$ & $0$ & $l$  \\
$A'B'C$ & $0$ & $m$ & $n$ & $0$ & $o$ & $0$ & $0$ & $p$ 
\end{tabular}
\end{center}
The constraints arising from the requirements that  $r_i | \Cij = r_j | \Cij$ are:
{\renewcommand{\arraystretch}{1.2}
\begin{equation*}
\label{const2eq}
\begin{array}{lclclcl} 
a + b & = & e + f & \quad & c + d & = & g + h \\
a + c & = & i + k & \quad & b + d & = & j + l \\
a + d & = & n + o & \quad & b + c & = & m + p \\
f + g & = & j + k & \quad & e + h & = & i + l \\
e + g & = & m + o & \quad & f + h & = & n + p \\
i + j & = & m + n & \quad & k + l & = & o + p \\
\end{array}
\end{equation*}}
Checking that a section in the support is a member of such a family amounts to assigning $1$ to the variable labelling that section, and $0$ to the other variables in its row.

It suffices to show that these constraints cannot be satisfied over the integers mod 2; this implies that they cannot be satisfied over $\ZZ$, since otherwise such a solution would descend via the homomorphism $\ZZ \rarr \ZZ / 2 \ZZ$.
Of course, this will also show that the cohomology obstruction does not vanish even if we use $\ZZ / 2 \ZZ$ as the coefficient group.

All cases for GHZ have been machine-checked in mod 2 arithmetic, and it has been confirmed that the cohomology obstruction witnesses the impossibility of extending any section in the support to all measurements; thus \textit{cohomology witnesses the strong contextuality of GHZ}.

\section{Kochen-Specker}\label{sec:ks}

We shall now examine covers that can be used for Kochen-Specker arguments. We shall show that the obstructions do not vanish in these cases, providing cohomological proofs of Kochen-Specker  theorems.

We introduce a general notion of Kochen-Specker-type models. We consider two outcomes, $0$ and $1$.
For any $C \in \mathcal{U}$, we define $s_{C,m} \in \mathcal{E}(C)$ to be the
section that assigns $1$ to $m$ and $0$ to all other measurements in $C$.
In a Kochen-Specker problem, we wish to assign the outcome $1$ to a single measurement in each context.
Thus, the \myemph{Kochen-Specker support} for the cover $\UU$ is the presheaf given by $S_e(C) = \{s_{C,m} \mid m \in C\}$.

\subsection*{Example: The Triangle}

We shall begin with the simplest Kochen-Specker scenario: the triangle. This has previously been discussed in \cite{abramsky2011unified}, and in a somewhat different context in \cite{Liang20111}. It cannot be realized in quantum mechanics, but it is useful to set the scene.

The triangle is the following cover on three measurements, $A$, $B$, $C$:
\[ \{ A, B \}, \quad \{ B, C \}, \quad \{ A, C \} . \]
We will be interested in the \textit{Kochen-Specker support}, which contains those sections with exactly one $1$ among the outcomes.
Thus we have the following table:
\begin{center}
\renewcommand{\arraystretch}{1.2}
\begin{tabular}{l|ccccc}
& $00$ & $01$ & $10$ & $11$  &  \\ \hline
$AB$ & $0$ & $1$ & $1$ & $0$ & \\
$BC$ & $0$ & $1$ & $1$ & $0$ & \\
$CA$ & $0$ & $1$ & $1$ & $0$ & \\
\end{tabular}
\end{center}
The content of the Kochen-Specker theorem is that there is no compatible family defining a global section within this support. The cohomological statement is that, for any choice of section $s$ in the support, the obstruction $\obst(s)$ does not vanish.

We label the coefficients for a candidate family as follows:
\begin{center}
\renewcommand{\arraystretch}{1.2}
\begin{tabular}{l|ccccc}
& $00$ & $01$ & $10$ & $11$  &  \\ \hline
$AB$ & $0$ & $a$ & $b$ & $0$ & \\
$BC$ & $0$ & $c$ & $d$ & $0$ & \\
$CA$ & $0$ & $e$ & $f$ & $0$ & \\
\end{tabular}
\end{center}
The constraints on the coefficients for a compatible family are as follows:
\[  a = f, \quad b = e, \quad a = d, \quad b = c, \quad d = e, \quad c = f . \]
These equations imply that all the variables are equal.

Checking that a section in the support has a non-vanishing obstruction amounts to setting the variable labelling that section to $1$, and the other variables in its row to $0$.
Clearly there is no solution for any such assignment, which would imply that $1 = 0$.

\subsection*{Example: The 18-Vector Kochen-Specker Configuration}

We look at the 18-vector construction in $\Real^4$ from \cite{cabello1996bell}.
This uses the following measurement cover, where the columns are the sets in the cover.
\begin{center}
\renewcommand{\arraystretch}{1.2}
\begin{tabular}{|c|c|c|c|c|c|c|c|c|} \hline
$A$ & $A$ & $H$ & $H$ & $B$ & $I$ & $P$ & $P$  & $Q$ \\ \hline
$B$ & $E$ & $I$ & $K$ & $E$ & $K$ & $Q$ & $R$ & $R$  \\ \hline
$C$ & $F$ & $C$ & $G$ & $M$ &  $N$ & $D$ & $F$ & $M$  \\ \hline
$D$ & $G$ & $J$ & $L$ & $N$  & $O$ & $J$ & $L$ & $O$  \\ \hline
\end{tabular}
\end{center}

We label the coefficients for a candidate family as follows:
\begin{center}
\renewcommand{\arraystretch}{1.2}
\begin{tabular}{l|cccc}
& $1000$ & $0100$ & $0010$ & $0001$    \\ \hline
$ABCD$ & $a$ & $b$ & $c$ & $d$   \\
$AEFG$ & $a$ & $e$ & $f$ & $g$   \\
$HICJ$ & $h$ & $i$ & $c$ & $j$   \\
$HKGL$ & $h$ & $k$ & $g$ & $l$   \\
$BEMN$ & $b$ & $e$ & $m$ & $n$   \\
$IKNO$ & $i$ & $k$ & $n$ & $o$   \\
$PQDJ$ & $p$ & $q$ & $d$ & $j$ \\
$PRFL$ & $p$ & $r$ & $f$ & $l$ \\
$QRMO$ & $q$ & $r$ & $m$ & $o$
\end{tabular}
\end{center}

Note that some of the constraints on the coefficients take the form of simple equations between variables, allowing us to reduce from 36 to 18 variables; we have used this reduction in the table.

The remaining constraints are expressed by the following equations.
{\renewcommand{\arraystretch}{1.5}\[ 
\begin{array}{lclclcl}
b + c + d & = & e + f + g & \qquad & a + b + d & = & h + i + j \\
a + c + d & = & e + m + n & \qquad & a + b + c & = & p + q + j \\
a + f + g & = & b + m + n & \qquad & a + e + f & = & h + k + l \\
a + e + g & = & p + r + l & \qquad & i + c + j & = & k + g + l \\
h + c + j & = & k + n + o & \qquad & h + i + c & = & p + q + d \\
h + g + l & = & i + n + o & \qquad & h + k + g & = & p + r + f \\
b + e + n & = & q + r + o & \qquad & b + e + m & = & i + k + o \\
i + k + n & = & q + r + m & \qquad & q + d + j & = & r + f + l \\
p + d + j & = & r + m + o & \qquad & p + f + l & = & q + m + o 
\end{array}
\]}

Checking that a section in the support has a non-vanishing obstruction amounts to setting the variable labelling that section to $1$, and the other variables in its row to $0$.

If the equations have no solution for all such assignments, this shows that the cohomology witnesses the contextuality of the model.

This has been machine-checked for mod 2 arithmetic, confirming that we have a \textit{cohomological witness for the Kochen-Specker theorem}.

\subsection*{A Class of Kochen-Specker-type Models}

A necessary condition can be given for Kochen-Specker-type models to have a global section is given in \cite{abramsky2011unified}. 

\begin{proposition}[\cite{abramsky2011unified}]
The existence of a global section implies that 
\[\gcd\{d_m \mid m \in X\} \;\; \mid \;\; |\UU| ,\]
where $d_m := |\{C \in \UU \mid m \in C\}|$.
\end{proposition}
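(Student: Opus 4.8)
The claim is about Kochen-Specker-type models, where the support $S_e(C) = \{s_{C,m} \mid m \in C\}$ consists of those sections assigning $1$ to exactly one measurement in each context $C$. A global section is a compatible family $\{s_{C} \in S_e(C)\}_{C \in \UU}$ in the \emph{bona fide} sense, determined by a single global assignment $g : X \to \{0,1\}$ with $g|C = s_C$. The plan is to extract a counting identity from the structure of such a global section and then read off the divisibility constraint.

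\textbf{The counting argument.}
Suppose a global section exists, given by $g : X \to \{0,1\}$. For each context $C \in \UU$, the section $g|C$ must lie in $S_e(C)$, so it assigns $1$ to exactly one measurement of $C$ and $0$ to the rest. First I would let $G := \{m \in X \mid g(m) = 1\}$ be the set of measurements receiving outcome $1$. The key observation is that each context contains \emph{exactly one} element of $G$. Now I would count, in two ways, the cardinality of the set of incidence pairs
\[
 P := \{(m, C) \mid m \in G, \; C \in \UU, \; m \in C\}.
\]
Summing over contexts first, each $C$ contributes exactly one such pair (its unique measurement in $G$), so $|P| = |\UU|$. Summing over measurements first, each $m \in G$ contributes $d_m := |\{C \in \UU \mid m \in C\}|$ pairs, so $|P| = \sum_{m \in G} d_m$. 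This yields the identity
\[
 \sum_{m \in G} d_m = |\UU| .
\]

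\textbf{Deriving the divisibility.}
Let $D := \gcd\{d_m \mid m \in X\}$. Since $D$ divides each $d_m$, it divides every term of the sum $\sum_{m \in G} d_m$, and hence divides the sum itself. By the counting identity, this sum equals $|\UU|$, so $D \mid |\UU|$, which is exactly the asserted conclusion.

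\textbf{Anticipated obstacle.}
The argument is essentially a clean double-counting, so there is no deep technical difficulty; the main point requiring care is the justification that each context meets $G$ in exactly one point. This is immediate from the definition of the Kochen-Specker support: the existence of a global section forces $g|C \in S_e(C)$ for every $C$, and membership in $S_e(C)$ means precisely that $g$ takes the value $1$ on a single element of $C$. One should note that $G$ may be empty a priori, but if some context $C$ meets $G$ in exactly one point then $G$ is nonempty; in any case the identity $\sum_{m \in G} d_m = |\UU|$ holds and the divisibility follows regardless. No appeal to cohomology is needed here—this is a purely combinatorial necessary condition, complementary to the cohomological sufficient conditions developed earlier.
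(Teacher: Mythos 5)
Your double-counting argument is correct and is essentially the same approach the paper uses for the cohomological generalization proved immediately after this statement: the identity $|\UU| = \sum_{C\in\UU}\sum_{m\in C} c_m = \sum_{m\in X} d_m c_m$ there is the weighted version of your incidence count, specialized here to coefficients $c_m \in \{0,1\}$ indicating your set $G$. Note that the paper only cites this proposition from \cite{abramsky2011unified} without reproducing a proof, so the comparison is with that adjacent argument; your version correctly dispenses with the connectedness hypothesis, since a genuine global section forces every context to meet $G$ exactly once, making the per-context sums equal to $1$ automatically.
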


We shall refer to this as the \GCD condition. All models that do not satisfy the \GCD condition are therefore strongly contextual. Using a similar argument, we can show that the cohomology witnesses strong contextuality of any model in this class, as long as we assume a natural connectedness property. In fact, it witnesses strong contextuality of some connected models outside of this class, so it captures the property more finely than the \GCD condition.

A model is said to be \myemph{connected} if, for any contexts $C, C' \in \UU$, one can find a 
a finite sequence of contexts $C_0 = C, C_1, C_2, \ldots, C_n, C_{n+1} = C'$ such that
$\forall \, i \in \{0, \ldots, n\}.\;\; C_i \cap C_{i+1} \neq \emptyset$.

\begin{proposition}
If $\gamma(s)$ vanishes for some section $s$ in the support of a connected Kochen-Specker-type model, then the \GCD condition holds for that model.
\end{proposition}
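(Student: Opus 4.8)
The plan is to start from the concrete criterion of Proposition~\ref{obstprop}. Fixing the chosen section as $s = s_{C_1,m_0} \in S_e(C_1)$, vanishing of $\obst(s)$ yields a family $\{r_i \in \FF(C_i)\}$ with $r_1 = s$ and $r_i | \Cij = r_j | \Cij$ for all $i,j$, where $\FF = \FZ S_e$. Since $S_e(C_i) = \{s_{C_i,m} \mid m \in C_i\}$, each $r_i$ is uniquely a $\ZZ$-linear combination $r_i = \sum_{m \in C_i} \lambda_{i,m}\, s_{C_i,m}$, and the whole argument is a linear bookkeeping of the integer coefficients $\lambda_{i,m}$.

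First I would compute the restriction maps explicitly. For $D := \Cij$ and $m \in C_i$, the event $s_{C_i,m}$ restricts to the section $t_{D,m}$ (value $1$ at $m$, $0$ elsewhere on $D$) if $m \in D$, and to the all-zero section $0_D$ if $m \notin D$. Expanding $r_i | D$ and $r_j | D$ in the free basis of $\FF(D)$ and matching coefficients therefore produces two families of constraints: from the generators $t_{D,m}$ with $m \in D$ we get $\lambda_{i,m} = \lambda_{j,m}$; and from the generator $0_D$ we get $\sum_{m \in C_i \setminus D}\lambda_{i,m} = \sum_{m \in C_j \setminus D}\lambda_{j,m}$. (Here I use that distinct Kochen-Specker contexts are incomparable --- being bases of a common dimension --- so that $0_D$ genuinely occurs on both sides.)

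The first family shows that the coefficient at a measurement is independent of the context containing it: any two contexts sharing $m$ force their coefficients at $m$ to agree, so there is a well-defined $\lambda_m \in \ZZ$ with $\lambda_{i,m} = \lambda_m$ whenever $m \in C_i$. Substituting into the second family and cancelling the common term $\sum_{m \in D}\lambda_m$ collapses it to $\sum_{m \in C_i}\lambda_m = \sum_{m \in C_j}\lambda_m$ for every pair of intersecting contexts. Writing $\Sigma(C) := \sum_{m \in C}\lambda_m$, connectedness of the model then propagates this equality along chains of contexts with nonempty pairwise intersection to give a single value $K$ with $\Sigma(C) = K$ for all $C \in \UU$; this is the one place where the connectedness hypothesis is essential. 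The condition $r_1 = s_{C_1,m_0}$ forces $\lambda_{m_0} = 1$ and $\lambda_m = 0$ for the other $m \in C_1$, whence $K = \Sigma(C_1) = 1$.

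Finally I would count the total $\sum_{C \in \UU}\Sigma(C)$ in two ways. On one hand it equals $|\UU| \cdot K = |\UU|$; on the other, interchanging the order of summation gives $\sum_{m \in X} d_m\,\lambda_m$, where $d_m = |\{C \in \UU \mid m \in C\}|$. Hence $\sum_{m \in X} d_m\,\lambda_m = |\UU|$. Since $\gcd\{d_m \mid m \in X\}$ divides each $d_m$ and the $\lambda_m$ are integers, it divides the left-hand side, and therefore divides $|\UU|$, which is exactly the \GCD condition. The only delicate point is the explicit description of the restriction maps and the resulting coefficient matching --- in particular isolating the constraint coming from the all-zero section $0_D$ --- since everything downstream is a short piece of integer arithmetic.
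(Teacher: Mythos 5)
Your proposal is correct and follows essentially the same route as the paper's own proof: derive per-measurement coefficient identifications and per-context sum equations from the compatibility constraints, propagate the constant context-sum (equal to $1$) by connectedness, and double-count $\sum_{C}\sum_{m\in C}c_m = \sum_m d_m c_m = |\UU|$ to get the divisibility. Your treatment is slightly more explicit than the paper's in spelling out the restriction maps and isolating the constraint attached to the all-zero generator $0_D$, but the argument is the same.
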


\begin{proof}
Assume that $\gamma(s_0)=0$  for some section $s_0 \in S_e(C_0)$ in the support.
This means that there is a compatible family $\{r_C \in \FF(C)\}_{C \in \UU}$
of $\ZZ$-linear combinations of sections of $S_e$, with $r_{C_0} = s_0$. 
Recall that the support of each context is $S_e(C) = \{s_{C,m} \mid m \in C\}$. Let $c_{C,m}$ denote the coefficient corresponding to section $s_{C,m}$ in the linear combination $r_C$.

If $m \in C \cap C'$ for contexts
$C = \{m, m_1, \ldots, m_r\}, C' = \{m, m_1', \ldots, m_s'\} \in \UU$,
we get the following cohomology equations:
\[
c_{C,m} = c_{C',m} \qquad\qquad
c_{C,m_1} + \cdots + c_{C,m_r} = c_{C',m_1'} + \cdots + c_{C',m_s'}
\]
By using the equations of the first kind, we can identify all the coefficients
of the form $c_{m,C}$ for the same measurement $m$, in much the same way as we did for the 18-vector Kochen-Specker example. So, we can unambiguously denote these
coefficients by $c_m$ alone.
Summing the two equations above then gives
\[\sum_{x \in C} c_x = c_m + c_{m_1} + \cdots + c_{m_r} = c_m + c_{m_1'} + \cdots + c_{m_s'} = \sum_{x' \in C'} c_{x'}\]
This means that the sums of the coefficients of $r_C$ and $r_{C'}$ are the same.
By connectedness,
and since the sum is equal to $1$ for the context $C_0$ (where we take our starting section $s_0$),
the coefficients must sum to one in every context.

Hence, we have
\[|\UU| = \sum_{C\in\UU} 1 = \sum_{C\in\UU} \; \sum_{m\in C} c_{m} = \sum_{m\in X}d_mc_m =
g \sum_{m\in X} \frac{d_m}{g}c_m\]
where $d_m := |\{C \in \UU \mid m \in C\}|$ as before and $g := \gcd\{d_m \mid m \in X\}$.
Since $g$ divides ${d_m}$ for all $m$, we conclude that $g$ divides $|\UU|$.
\end{proof}

\section{Example: The Peres-Mermin Square}
We now turn to an important example, the Peres-Mermin square \cite{peres1990incompatible,mermin1990simple}, which can be realized in quantum mechanics using two-qubit observables.

The structure of the square is as follows:
\begin{center}
\renewcommand{\arraystretch}{1.5}
\begin{tabular}{|c|c|c|}
\hline
$A$ & $B$ & $C$ \\ \hline
$D$ & $E$ & $F$ \\ \hline
$G$ & $H$ & $I$ \\ \hline
\end{tabular}
\end{center}
The compatible families of measurements are the rows and columns of this table.
The problem in question differs from the usual Kochen-Specker type problems
in that we don't ask for exactly one $1$ at each maximal context. Instead, we ask
that each `row context' has an odd number of $1$s whereas each `column context'
has an even number $1$s.

Hence, the support table is the following.
Note that the first three lines correspond to the row contexts and the remaining
three to the column contexts.
\begin{center}
\renewcommand{\arraystretch}{1.5}
\begin{tabular}{l|cccccccc}
      & $000$ & $001$ & $010$ & $011$ & $100$ & $101$ & $110$ & $111$   \\ \hline
$ABC$ &  $0$  &  $1$  &  $1$  &  $0$  &  $1$  &  $0$  &  $0$  &  $1$  \\
$DEF$ &  $0$  &  $1$  &  $1$  &  $0$  &  $1$  &  $0$  &  $0$  &  $1$  \\
$GHI$ &  $0$  &  $1$  &  $1$  &  $0$  &  $1$  &  $0$  &  $0$  &  $1$  \\
$ADG$ &  $1$  &  $0$  &  $0$  &  $1$  &  $0$  &  $1$  &  $1$  &  $0$  \\
$BEH$ &  $1$  &  $0$  &  $0$  &  $1$  &  $0$  &  $1$  &  $1$  &  $0$  \\
$CFI$ &  $1$  &  $0$  &  $0$  &  $1$  &  $0$  &  $1$  &  $1$  &  $0$  \\
\end{tabular}
\end{center}

We display the coefficients for a candidate family as follows.
\def\na{\overline{a}}
\def\nb{\overline{b}}
\def\nc{\overline{c}}
\def\nt{\overline{t}}
\begin{center}
\renewcommand{\arraystretch}{1.5}
\begin{tabular}{l|cccccccc}
      & $000$ & $001$ & $010$ & $011$ & $100$ & $101$ & $110$ & $111$   \\ \hline
$ABC$ &  $0$  & $c_1$ & $b_1$ &  $0$  & $a_1$ &  $0$  &  $0$  & $t_1$  \\
$DEF$ &  $0$  & $c_2$ & $b_2$ &  $0$  & $a_2$ &  $0$  &  $0$  & $t_2$  \\
$GHI$ &  $0$  & $c_3$ & $b_3$ &  $0$  & $a_3$ &  $0$  &  $0$  & $t_3$  \\
$ADG$ &$\nt_4$&  $0$  &  $0$  &$\na_4$&  $0$  &$\nb_4$&$\nc_4$&  $0$  \\
$BEH$ &$\nt_5$&  $0$  &  $0$  &$\na_5$&  $0$  &$\nb_5$&$\nc_5$&  $0$  \\
$CFI$ &$\nt_6$&  $0$  &  $0$  &$\na_6$&  $0$  &$\nb_6$&$\nc_6$&  $0$  \\
\end{tabular} 
\end{center}
The equations expressing the constraints are the following:
{\renewcommand{\arraystretch}{1.5}
\begin{equation*}
\label{new_pmeqs}
\begin{array}{lclclcl}
  a_1 + t_1 &=& \nb_4 + \nc_4 & \quad & \na_4 + \nt_4  &=& b_1 + c_1 
  \\
  b_1 + t_1 &=& \nb_5 + \nc_5 & \quad & \na_5 + \nt_5  &=& a_1 + c_1 
  \\
  c_1 + t_1 &=& \nb_6 + \nc_6 & \quad & \na_6 + \nt_6  &=& a_1 + b_1 
  \\
  ~\\
  a_2 + t_2 &=& \na_4 + \nc_4 & \quad & \nb_4 + \nt_4  &=& b_2 + c_2 
  \\
  b_2 + t_2 &=& \na_5 + \nc_5 & \quad & \nb_5 + \nt_5  &=& a_2 + c_2 
  \\
  c_2 + t_2 &=& \na_6 + \nc_6 & \quad & \nb_6 + \nt_6  &=& a_2 + b_2 
  \\
  ~\\
  a_3 + t_3 &=& \na_4 + \nb_4 & \quad & \nc_4 + \nt_4  &=& b_3 + c_3 
  \\
  b_3 + t_3 &=& \na_5 + \nb_5 & \quad & \nc_5 + \nt_5  &=& a_3 + c_3 
  \\
  c_3 + t_3 &=& \na_6 + \nb_6 & \quad & \nc_6 + \nt_6  &=& a_3 + b_3 
  \\
\end{array}
\end{equation*}}

We start by choosing a section $s$. We set its coefficient to $1$, and the coefficients of all other sections in the same context to $0$. Then a solution to the equations above would give a compatible family in $\FF$ containing $s$, meaning that the cohomological obstruction vanishes.
It has been machine-checked using mod-2 arithmetic that 
there is no solution to the system for any choice of starting section $s$.

\section{Limitations and Further Directions}

There are two immediate limitations to the results we have described:
\begin{itemize}
\item The cohomological condition for contextuality is sufficient, but not necessary.
It is interesting to note that the example where a false positive does arise here, namely the Hardy model, is non-local and hence contextual, but not \textit{strongly contextual}.

It has been possible to construct a strongly contextual model for which a false positive does arise. This is the Kochen-Specker model for the cover
\[ \{A,B,C\}, \{B,D,E\}, \{C,D,E\}, \{A,D,F\}, \{A,E,G\} \]
However, unlike our earlier examples, this model does not satisfy any reasonable criterion for symmetry, nor does it satisfy any strong form of connectedness.
In fact, the existence of measurements belonging to a single maximal context, namely $F$ and $G$,
seems to be crucial in this example. It is always possible to choose coefficients
for $s_{\{A,D,F\},F}$ and $s_{\{A,E,G\},G}$ (using the notation of section \ref{sec:ks}) that will make the coefficients of the respective contexts sum to one, without imposing constrains on the other contexts.

\begin{conjecture}
Under suitable assumptions of symmetry and connectedness, the cohomology obstruction is a complete invariant for strong contextuality.
\end{conjecture}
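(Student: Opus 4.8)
The plan is to prove the non-trivial direction of completeness, namely that strong contextuality forces $\gamma(s) \neq 0$ for every section $s$, by establishing its contrapositive: under the symmetry and connectedness hypotheses, if $\gamma(s_0)=0$ for some section $s_0$ then the model is \emph{not} strongly contextual. By Proposition~\ref{obstprop}, $\gamma(s_0)=0$ means exactly that $s_0$ extends to a compatible family $\{r_C \in \FF(C)\}$ of $\ZZ$-linear combinations, so the entire content is to rule out such \emph{false positives} once symmetry and connectedness are imposed. The first task is therefore to pin down the ``suitable assumptions'': I would take a finite group $G$ acting on $X$ that preserves the cover $\UU$ and the support presheaf $S_e$ (so $g \cdot s_{C,m} = s_{gC,gm}$), and require the action to be suitably transitive on measurements — rich enough that all measurements are equivalent under the combined symmetry. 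Transitivity of this kind is precisely what excludes the pathological mechanism flagged in Section~7, where a measurement lying in a single maximal context (such as $F$ or $G$ there) contributes a coefficient unconstrained by any compatibility equation and hence freely tunable to fake a global family.

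Second, I would extract the linear data exactly as in the \GCD argument of the preceding proposition. Writing $c_{C,m}$ for the coefficient of $s_{C,m}$ in $r_C$, the compatibility equations $r_C | C\cap C' = r_{C'} | C\cap C'$ split into identifications $c_{C,m}=c_{C',m}$ for $m \in C\cap C'$, together with marginal equations equating the sums of the remaining coefficients. Connectedness then lets me identify all coefficients of a fixed measurement into a single $c_m$ and propagate the normalisation $\sum_{m\in C} c_m = 1$ from the starting context $C_0$ (where $r_{C_0}=s_0$ forces the sum to be $1$) to every context. At this stage a false positive is precisely an integer solution $(c_m)_{m\in X}$ of the affine system $\sum_{m\in C} c_m = 1$, for all $C \in \UU$, that does \emph{not} arise from a genuine ``one-hot'' section assignment; the goal becomes to show that, under the hypotheses, solvability of this relaxed integer system already forces solvability of the Boolean one-hot system, contradicting strong contextuality.

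The third and decisive step uses symmetry, and it is here that I expect the main obstacle to lie. Averaging any integer solution over the $G$-orbit yields a $G$-invariant \emph{rational} solution of the same affine system; for a transitive action this is the barycentric assignment $c_m = 1/|C|$ on contexts of uniform size. This already detects two things cleanly — in the non-uniform case it produces an immediate contradiction, forcing non-vanishing — but in the uniform case the barycenter is always a rational solution and so says nothing by itself. The essential difficulty is thus a Diophantine integrality gap: bridging from the existence of a relaxed (fractional, or signed integral) solution back to a genuine combinatorial section is false without hypotheses, being exactly the gap the false positives exploit. The real work of the conjecture is to isolate the minimal symmetry-plus-connectedness conditions under which the relaxed $\ZZ$-system and the one-hot Boolean system are equisolvable — plausibly through an LP-duality or Birkhoff--von Neumann style decomposition of the symmetric solution polytope into section vertices, combined with a Smith-normal-form analysis of the identified integer system — and then to verify that these conditions hold for the symmetric examples (the triangle, the PR box, GHZ, the Peres--Mermin square, and the $18$-vector configuration) while provably excluding the asymmetric, weakly-connected counterexample of Section~7. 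Framed cohomologically, this amounts to controlling the $G$-invariant part of the connecting map into $\CoFC{1}$ well enough to certify that its vanishing on a section class cannot occur without a genuine lift in $\Cohom{0}$.
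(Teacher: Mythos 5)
The statement you are addressing is a \emph{conjecture}: the paper offers no proof of it, does not even fix the ``suitable assumptions of symmetry and connectedness'', and explicitly defers the question to future work. So there is no proof in the paper to compare against, and the only question is whether your proposal actually closes the conjecture. It does not. You correctly identify the right reduction — by Proposition~\ref{obstprop}, $\obst(s_0)=0$ yields a compatible family of $\ZZ$-linear combinations, and for Kochen-Specker-type supports the identifications $c_{C,m}=c_{C',m}$ plus connectedness collapse this to the affine system $\sum_{m\in C}c_m=1$ over $\ZZ$ — but this is exactly the reduction the paper already performs in its \GCD proposition, and it only yields a \emph{necessary} condition. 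The decisive step, showing that under the hypotheses an integer solution of the relaxed system forces a $\{0,1\}$ ``one-hot'' solution (i.e.\ a genuine global section), is precisely the integrality gap you name and then leave open, gesturing at LP duality, Birkhoff--von Neumann decompositions, and Smith normal form without carrying any of them out. Your orbit-averaging step illustrates the problem: in the uniform case it produces the barycentre $c_m=1/|C|$, which is always a rational solution, so symmetry as you have deployed it extracts no contradiction at all. A proposal whose central lemma is ``isolate the conditions under which the two systems are equisolvable'' is a restatement of the conjecture, not a proof of it.

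There is also a scope mismatch. The conjecture concerns strong contextuality in general, and the paper's motivating strongly contextual examples include GHZ and the Peres-Mermin square, whose supports are \emph{not} of Kochen-Specker type (they do not consist of one-hot sections). Your entire linear-algebraic framework — coefficients $c_m$ indexed by measurements, the normalisation $\sum_{m\in C}c_m=1$, the Boolean one-hot target — presupposes the Kochen-Specker support structure of Section~\ref{sec:ks}. For a general support $S_e(C)$ the unknowns are coefficients indexed by \emph{sections}, the compatibility equations are the marginal-matching equations displayed in the GHZ and Peres-Mermin examples, and there is no analogue of ``one coefficient per measurement''. So even granting the missing integrality lemma, your argument would at best address the Kochen-Specker subclass, not the conjecture as stated. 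What you have written is a reasonable research plan that correctly locates the difficulty (and correctly diagnoses why the five-context counterexample with the dangling measurements $F$ and $G$ must be excluded), but it contains no new mathematical content beyond what the paper itself already establishes and concedes.
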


In \cite{vorobev}, Vorob'ev
characterised the covers (or to be more precise the simplicial complexes these generate)
for which any model is extendable; i.e. non-contextual.
These are exactly the complexes which can be reduced to an empty complex by removing certain extremal
maximal contexts. From the proof of this result, one can see that the non-extendability of a model would be already noticed
in its reduced version, which allows us to focus on witnessing non-contextuality for irreducible covers.
A necessary condition for a context to be extremal
is that it possesses measurements not belonging to any other maximal context. Even though the above example has no
extremal contexts, and thus is irreducible, it does have this weaker property.

\item Thus far, we have simply been computing the obstructions by brute force enumeration, so the results we have obtained can only be considered a proof of concept. What one would like is to use the machinery of homological algebra and exact sequences to obtain more conceptual  and general results. 
\end{itemize}

Overcoming these limitations is the main objective for future work. This may require refining the abelian presheaf $\FF$ to yield a finer invariant.

\section*{Acknowledgements}
Support from EPSRC Senior Research Fellowship EP/E052819/1, the U.S. Office of Naval Research Grant Number N000141010357, 
 the Marie Curie Initial Training Network - MALOA - From MAthematical LOgic to Applications, PITN-GA-2009-238381, and the National University of Ireland Travelling Studentship Program is gratefully acknowledged.

\bibliographystyle{eptcs}
\bibliography{bdbib-1}
\end{document}